\let\old@ps@headings\ps@headings
\let\old@ps@IEEEtitlepagestyle\ps@IEEEtitlepagestyle
\def\psccfooter#1{%
    \def\ps@headings{%
        \old@ps@headings%
        \def\@oddfoot{\strut\hfill#1\hfill\strut}%
        \def\@evenfoot{\strut\hfill#1\hfill\strut}%
    }%
    \def\ps@IEEEtitlepagestyle{%
        \old@ps@IEEEtitlepagestyle%
        \def\@oddfoot{\strut\hfill#1\hfill\strut}%
        \def\@evenfoot{\strut\hfill#1\hfill\strut}%
    }%
    \ps@headings%
}
\DeclareMathAlphabet{\mathdutchcal}{U}{dutchcal}{m}{n}
\newtheorem{rem}{Remark}
\newtheorem{defi}{Definition}
\newtheorem{prop}{Proposition}
\newtheorem{ass}{Assumption}
\newenvironment{proof}{\vspace{0cm}\paragraph*{Proof}}{\vspace{0cm}\hfill$\blacksquare$}
\renewcommand{\r}{\textcolor{black}}
\newcommand{\rr}{\textcolor{black}}
\newcommand{\normsz}[1]{\lVert #1 \rVert}
\begin{document}
%
\title{Adaptive Real-Time Grid Operation via Online Feedback Optimization with Sensitivity Estimation}

\author{
\IEEEauthorblockN{Miguel Picallo\IEEEauthorrefmark{1}, Lukas Ortmann\IEEEauthorrefmark{1}, Saverio Bolognani, Florian D{\"o}rfler}
\IEEEauthorblockA{Automatic Control Laboratory, ETH Zurich, 8092 Zurich, Switzerland \\ \{miguelp,ortmannl,bsaverio,dorfler\}@ethz.ch}
}

\maketitle

\begin{abstract}
In this paper we propose an approach based on an Online Feedback Optimization (OFO) controller with grid input-output sensitivity estimation for real-time grid operation, e.g., at subsecond time scales. The OFO controller uses grid measurements as feedback to update the value of the controllable elements in the grid, and track the solution of a time-varying AC Optimal Power Flow (AC-OPF). Instead of relying on a full grid model, e.g., grid admittance matrix, OFO only requires the steady-state sensitivity relating a change in the controllable inputs, e.g., power injections set-points, to a change in the measured outputs, e.g., voltage magnitudes. Since an inaccurate sensitivity may lead to a model-mismatch and jeopardize the performance, we propose a recursive least-squares estimation that enables OFO to learn the sensitivity from measurements during real-time operation, turning OFO into a model-free approach. We analytically certify the convergence of the proposed OFO with sensitivity estimation, and validate its performance on a simulation using the IEEE 123-bus test feeder, and comparing it against a state-of-the-art OFO with constant sensitivity.
\end{abstract}

\begin{IEEEkeywords}
Online Feedback Optimization, Real-time AC Optimal Power Flow, Recursive Estimation, Voltage Regulation
\end{IEEEkeywords}

\thanksto{\noindent \IEEEauthorrefmark{1} These two authors contributed equally.
\newline
Funding by the Swiss Federal Office of Energy through the projects “ReMaP” (SI/501810-01) and “UNICORN” (SI/501708), the Swiss National Science Foundation through the NCCR Automation, and by the ETH Foundation is gratefully
acknowledged.}

\section{Introduction}
The increasing amount of controllable, yet sometimes unpredictable, power resources in electrical grids, e.g., renewable generation, electric vehicles, flexible loads, etc., leads to new challenges and opportunities in the operation of power systems. On the one hand, these new controllable elements allow to minimize the grid operational cost and promote a transition to a more sustainable power system. On the other hand, given the volatility and unpredictability of these resources, fast control decisions are required to avoid constraint violations, e.g., overvoltages. This is especially relevant in distribution grids, where many of these resources are deployed. However, measurement scarcity and poor grid models challenge grid operation at such low voltage levels.

One way to leverage the controllability of these resources and to optimize the grid operation is by solving an AC Optimal Power Flow (AC-OPF) \cite{molzahn2019surveyrel}, an optimization problem to determine the set-points of controllable resources that minimize the operational cost and enforce grid safety requirements, e.g., voltage limits, line thermal limits, etc. Unfortunately, standard AC-OPF requires a) full grid observability, e.g., measurements of all active and reactive power injections and consumptions, and b) an accurate nonlinear grid model, e.g., its admittance matrix \cite{molzahn2019surveyrel}. Yet, learning the model may require an extensive deployment of measurements across the network \cite{bolognani2013topol, moffat2020pmuimpedance}, usually not available or affordable on the distribution system level. Furthermore, the volatility of renewable energy sources and household loads requires high sampling and control-loop rates to satisfy the grid constraints. Yet, solving a computationally expensive AC-OPF may pose a limit on these rates.

Online Feedback Optimization (OFO) \cite{molzahn2017survey, hauswirth2017online, anese2016optimal} is a novel computationally efficient approach that allows to track the solutions of an AC-OPF problem under time-varying conditions using subsecond control-loop rates. OFO is based on a controller that uses grid measurements as feedback to iteratively steer the controllable input set-points towards the AC-OPF solutions, and has already been successfully tested in both simulations and experimental settings \cite{ortmann2020experimental}. Furthermore, OFO neither requires full grid observability \cite{picallo2020seopfrt}, nor an accurate nonlinear grid model. It only needs measurements of the outputs that need to be controlled, and the input-output sensitivity that matches a change in the input to a change in the output. \r{This sensitivity is essentially a derivative of the power flow equations at the operating point \cite{bolognani2015fast}, and thus depends on the grid state and exogenous disturbances, e.g., loads. Hence, constructing an accurate sensitivity requires the grid model and full measurements of the grid to evaluate it. To avoid these requirements, some OFO approaches use a constant approximate linear model, and thus a constant approximate sensitivity \cite{anese2016optimal,picallo2020seopfrt,ortmann2020experimental}.} Even though OFO is robust against small approximation errors in this sensitivity \cite{ortmann2020experimental}, an inaccurate sensitivity introduces a model-mismatch that may lower the approach performance \cite{colombino2019towards}.
\rr{Therefore, some model-free approaches try to operate the system optimally without requiring a model or sensitivity. First, reinforcement learning allows to disregard the model, and instead take decisions based solely on measurements \cite{chen2021reinforcement}. However, reinforcement learning has limited theoretical guarantees, and may not be able to enforce the grid safety constraints during its learning phase. Second, data-driven control \cite{coulson2019data, mugnier2016model, bianchin2021data} based on Willems Fundamental lemma \cite{willems2005note} allows to compute the sensitivity after gathering sufficient data. Yet, these approaches estimate a constant linear model, and thus may fail to adapt to different operating points. Finally, zeroth-order gradient-free methods as \cite{chen2021model} allow to operate the system while continuously estimating and updating the sensitivity. However, \cite{chen2021model} requires a sufficient time-scale separation between the sensitivity estimation procedure and the feedback optimization, which may lower the convergence rate of the entire approach if the measurement sample rate is restricted due to communication limits.}

Therefore, in this paper, \rr{with a similar spirit as in the extremum seeking approach \cite{chen2021model}}, we propose a model-free OFO approach that sequentially estimates \rr{a time-varying} sensitivity while operating the grid, bypassing the need to know the whole grid model accurately, and to have full grid observability.
Our contributions are as follows: First, we design a sensitivity learning approach via recursive least squares \cite{lennart1999system,isermann2010identification}. We use as measurements the change in the outputs caused by a change of the controllable inputs. Second, we combine this sensitivity estimation with a persistently exciting OFO that gathers enough information about the sensitivity while driving the control inputs towards the AC-OPF solutions. Third, we certify the convergence of both the estimated sensitivity and the control input towards the true sensitivity and the time-varying solution of the AC-OPF, respectively. Fourth and finally, we simulate the proposed OFO controller with sensitivity estimation on the 3-phase, unbalanced IEEE 123-bus test feeder \cite{kersting1991radial} using real consumption data, and show its superior performance over a state-of-the-art OFO with a constant sensitivity approximation.

The paper is structured as follows: Section~\ref{sec:pre} presents some preliminaries on grid models, AC-OPF and OFO. Section~\ref{sec:ofosensi} explains our proposed OFO with sensitivity estimation approach, and provides theoretical convergence guarantees. Section~\ref{sec:test} shows the simulation on a test feeder. Finally, Section~\ref{sec:conc} concludes and discusses further work.

    

\section{Preliminaries: Grid Model, AC-OPF and OFO}\label{sec:pre}

\subsection{Grid Model}
For each bus $i$ of a $n$-bus power system we define the voltage magnitude as $v_i \in \mathbb{R}$, the active and reactive power as $p_i \in \mathbb{R}$ and $q_i \in \mathbb{R}$, respectively. We obtain the vectors $v$, $p$, and $q$ of dimension $n$ by stacking the individual bus quantities, i.e., $v=[v_1,\dots,v_n]^T$. We define the control input vector $u \in \mathbb{R}^{n_u}$ consisting of all the controllable resources (e.g. active and reactive generation and flexible loads in $p$ and $q$, slack bus voltage magnitude $v_1$ through tap changers); the output vector $y$ (e.g. voltage magnitude elements in $v$) with all the quantities that we measure and want to control through the inputs; and the disturbance vector $d$ with all uncontrollable power injections (e.g. conventional consumption loads in $p$ and $q$). The grid admittance matrix and the power flow equations allow to define an input-output map $\mathdutchcal{h}(\cdot)$ that characterizes the output $y$ as a non-linear function of $u$ and $d$:
\begin{equation}\label{eq:ioPF}
    y= \mathdutchcal{h}(u,d).
\end{equation}
The input-output map $\mathdutchcal{h}(\cdot)$ is not typically available in closed form, since in general it is not possible to derive an analytical expression of $v$ (in $y$) as a function of $p$ and $q$ (in $u$ and $d$) using the power flow equations \cite{molzahn2019surveyrel}. Yet, the local existence of a continuous differentiable map $\mathdutchcal{h}(\cdot)$ can be guaranteed by the implicit function theorem \cite{krantz2012implicit}. 


\subsection{AC Optimal Power Flow for Grid Operation}
The operation of a power grid consists of deciding the input $u_t$ at each time instant $t$. An AC-OPF allows to formulate this decision process as an optimization problem:
\begin{align}\label{eq:OPF}
        \begin{split}
            u^*_t,y_t^* = & \arg\min_{u \in \mathcal{U}_t,y} f(u) + g(y) \\
            & \text{ s.t. } y=\mathdutchcal{h}(u,d_t),
        \end{split}
    \end{align}
where $f(u)$ is the operational cost on the input $u$; $g(y)$ is a penalty function to enforce some grid specification on the output $y$, e.g., voltage limits; $\mathcal{U}_t$ is the time-varying set of admissible inputs that defines the operational constraints on $u_t$, e.g., power limits $\mathcal{U}_t = \{u\,|\,\underline{u}_t<u<\overline{u}_t\}$; and $d_t$ is the disturbance value at time $t$, e.g., uncontrollable loads or non-dispatchable generation. The nonlinear input-output model \eqref{eq:ioPF} in \eqref{eq:OPF} relates the outputs to the chosen input.



Optimal real-time decision making consists of first taking measurements $d_{t}$; then, solving the AC-OPF problem \eqref{eq:OPF}, and finally applying the solution $u^*_t$ to the system. Then, this is repeated at the next time step $t+1$.

\subsection{Linear Power Flow Approximation}

Solving AC-OPF problems \eqref{eq:OPF} to determine the set-points of power resources is a compelling and valuable tool for grid operators, but it comes with some drawbacks: First, the full nonlinear model of the grid $\mathdutchcal{h}(u,d)$ is needed. Second, solving the AC-OPF~\eqref{eq:OPF} can be computationally expensive, which may jeopardize its use for real-time grid operation. This can be circumvented by linearizing the map $\mathdutchcal{h}(\cdot)$ in \eqref{eq:ioPF} at an operating point \cite{low2014convex, bolognani2015existence, molzahn2019surveyrel}, e.g., the zero-injection point $(u_\text{op},d_\text{op})=(0,0)$, to obtain the approximation
\begin{align}\label{eq:linPF}
    \begin{split}
        y = H_0 u + D_0 d + y_0,
    \end{split}
\end{align}
where $y_0$ is an offset representing the output value when $u=d=0$, e.g., $1$~p.u. for all voltage magnitudes. The matrices $H_0=\nabla_u \mathdutchcal{h}(u,d)|_{(u_\text{op},d_\text{op})}$ and $D_0=\nabla_d \mathdutchcal{h}(u,d)|_{(u_\text{op},d_\text{op})}$ are evaluated at the operating point, and represent the sensitivities of the output with respect to changes in the input $u$ and disturbance $d$, respectively. This linear approximation \eqref{eq:linPF} can substitute the nonlinear map $\mathdutchcal{h}(\cdot)$ in the AC-OPF \eqref{eq:OPF} to get
\begin{align}\label{eq:linOPF}
        \begin{split}
            \min_{u \in \mathcal{U}_t} f(u) + g(H_0 u + D_0 d_t + y_0).
        \end{split}
\end{align}

\subsection{Online Feedback Optimization (OFO)}


Solving the AC-OPF with linear power flow approximation \eqref{eq:linOPF} is computationally efficient and could be employed in real-time operation. However, this approach does not take advantage of output measurements $y_t$, since it only feeds $d_t$ through the inaccurate linear model \eqref{eq:linPF}. Hence, such a \textit{feedforward} approach introduces a model-mismatch that can cause a performance degradation, and even lead to constraint violations, e.g., under and overvoltages.

Instead, OFO is a novel approach \cite{hauswirth2017online,anese2016optimal,molzahn2017survey} that uses $y_t$ as \textit{feedback} to achieve a safer grid operation and track the solution of the AC-OPF \eqref{eq:OPF} under time-varying conditions. 
For that, OFO turns a standard optimization algorithm, in our case projected gradient decent  \cite{bertsekas1997nonlinear}, into a feedback controller that takes the grid output measurements $y_t$, instead of computing the output $y_t$ via the grid model \eqref{eq:ioPF} or the linearized one \eqref{eq:linPF}. Projected gradient decent  consists of a gradient step and a projection: First, we compute the gradient of the cost function in \eqref{eq:linOPF}:
\begin{align*}
    \nabla_u \big(f(u)+g(y)\big)\overset{\eqref{eq:linPF}}{=}\nabla_u f(u) + H_0^T \nabla_y g(y).
\end{align*}

To minimize the operational cost, the current input $u_t$ is pushed along the direction of the negative gradient with a step size $\alpha$, and then it is projected onto the feasible space $\mathcal{U}_t$ to enforce the operational constraints on the input, i.e.,
\begin{align}\label{eq:ofo}
    \begin{split}
        u_{t+1} = \Pi_{\mathcal{U}_t}\big[u_t - \alpha \big(\nabla_u f(u_t) + H_0^T \nabla_y g(y_t)\big)\big],
    \end{split}
\end{align}
where $\Pi_{\mathcal{U}}\big[u]=\arg \min_{z \in \mathcal{U}} \normsz{u-z}_2^2$ is the projection of $u$ onto $\mathcal{U}$, which is typically easy to evaluate for power grid operation \cite{anese2016optimal}, especially if $\mathcal{U}_t = \{u\,|\,\underline{u}_t \leq u \leq \overline{u}_t\}$ is a box constraint.

\section{Online Feedback Optimization with Sensitivity Estimation} \label{sec:ofosensi}

The OFO controllers are robust, \r{i.e., preserve stability,} against using a constant power flow sensitivity approximation $H_0$ instead of the actual one $\nabla_u \mathdutchcal{h}(u,d)$  \cite{ortmann2020experimental,colombino2019towards}. Unfortunately, even if the overall system is stable, a model mismatch between $H_0$ and $\nabla_u \mathdutchcal{h}(u,d)$ may lead to a difference between the solution $u_t^*$ of the AC-OPF problem and the values $u_t$ produced by the OFO controller \eqref{eq:ofo} \cite{colombino2019towards}.
Therefore, we propose an approach to sequentially update the sensitivity $H_0$ into a good approximation of the true sensitivity $\nabla_u \mathdutchcal{h}(u,d)$, and thus avoid a potential performance degradation. For that, we will consider the sensitivity as a time-varying parameter $H_t=\nabla_u \mathdutchcal{h}(u_t,d_t)$, and propose a recursive least-squares approach to generate sensitivity estimates $\hat{H}_t$ using the measured variations of $y$ and $u$ over time, $\Delta u$ and $\Delta y$ respectively.
Then, in every time step we feed this estimated sensitivity $\hat{H}_t$ to the OFO as in Figure~\ref{fig:block_diagram}. 

\tikzstyle{block} = [draw, fill=white, rectangle, 
    minimum height=2em, minimum width=3em]
\tikzstyle{block2} = [draw, fill=white, rectangle, 
    minimum height=1.5em, minimum width=3em]
\tikzstyle{sum} = [draw, fill=white, circle, node distance=1cm]
\tikzstyle{input} = [coordinate]
\tikzstyle{output} = [coordinate]
\tikzstyle{pinstyle} = [pin edge={to-,thin,black}]

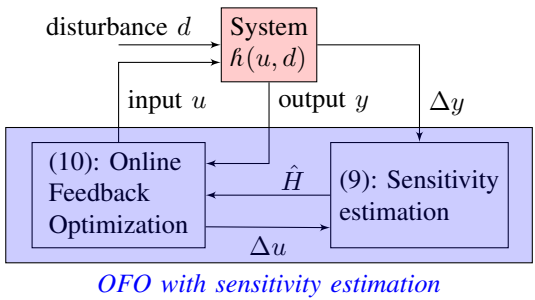
\begin{figure}[b]
\centering
\begin{tikzpicture}[auto, node distance=2cm,>=latex']
	\node [input] (input0) {};   
    \node [input, right of =input0, node distance = 1cm,align=left] (sat) {};
    \node [block, right of = sat, node distance = 2cm, align=left, fill = red, fill opacity=0.2, text opacity = 1] (sys) {System \\ $\mathdutchcal{h}(u,d)$};
    \node [input, right of = sys, node distance = 2cm,align=left, minimum height = 1.cm] (meas) {};
    
    \node [block, below of = sat, node distance = 2cm,align=left, minimum height = 1.4cm, minimum width = 2.3cm] (opt) {\eqref{eq:ofope}: Online \\ Feedback \\ Optimization};
    \node [block, below of = meas, node distance = 2cm,align=left, minimum height = 1.4cm] (est) {\eqref{eq:sensi}: Sensitivity \\ estimation};
    
    \draw[->] (est) -- node[name=h,above=-1.5pt,pos=0.3] {$\hat{H}$} (opt);
    \draw[->] (opt.340) -- node[name=du,below] {$\Delta u$} (est.200);
    
    \draw[->] (sys) -| node [name=y,right,pos=0.8] {$\Delta y$} (est);
    \draw[->] (opt) |- node [name=u,right,pos=0.25] {input $u$} (sys.200);
    \draw[->] (sys) |- node [name=x,right,pos=0.125] {output $y$} (opt.20);

    \node [block, below of =sys, node distance = 2cm,fill = blue, fill opacity=0.2, minimum height = 1.8cm, minimum width = 7cm] (algblock) {};
    \node[text=blue, below of = sys, node distance =3.2cm] (optest) {\textit{OFO with sensitivity estimation}};
        
    \node [input,left of = sys, node distance = 2cm] (inpdist) {};
    \draw[->] (inpdist) -- node[name=dist,above,pos =0.] {disturbance $d$} (sys);    
\end{tikzpicture}

\caption{Model-free grid operation via Online Feedback Optimization (OFO) with sensitivity estimation.}\textbf{\label{fig:block_diagram}}
\end{figure}

\subsection{Sensitivity Estimation}

\r{Due to the non-linearity of $\mathdutchcal{h}(u,d)$, the true sensitivity $\nabla_u \mathdutchcal{h}(u,d)$ depends on the values of $u$ and $d$. The temporal variation of the disturbance $d_t$ and the input $u_t$, e.g., due to applying the OFO controller \eqref{eq:ofo} in the input case, produces a time-varying sensitivity $H_t=\nabla_u \mathdutchcal{h}(u_t,d_t)$.} Instead of learning the dependency on $u$ and $d$, we model a time-varying sensitivity $H_t$ with the following random process:
    \begin{align}\label{eq:Hdyn}
        \begin{split}
            h_{t} = h_{t-1} + \omega_{p,t-1}
        \end{split}
    \end{align}
where $h=\text{vec}(H)$ is the column-wise vector representation of the sensitivity matrix $H$, $\Delta u_{t-1} = u_{t}-u_{t-1}$ denotes a change of the input $u$, and $\omega_{p,t} \sim \mathcal{N}(0,\Sigma_{p,t})$ is a Gaussian process noise with covariance $\Sigma_{p,t}=\Sigma_{p_1}+\Sigma_{p_2}\normsz{\Delta u_{t}}_2^2$, that represents how the sensitivity changes over time. 
\r{We make the part $\Sigma_{p_2}$ of the process noise proportional to $\normsz{\Delta u_t}_2$, since a large $\Delta u_t$ can trigger a larger change in the true sensitivity $\nabla_u \mathdutchcal{h}(u,d)$ that depends on $u$, and the part $\Sigma_{p_1}$ independent of $\Delta u_t$ to account for a uncontrolled random change $\Delta d_{t}= d_{t+1}-d_t$ that can affect the sensitivity as well.}

Next, to derive a measurement equation for the sensitivity $H_t$,
\r{consider the first-order Taylor approximation of $y_t$
\begin{align}\label{eq:taylormeaseq}
        \begin{split}
           \overbrace{\mathdutchcal{h}(u_t,d_t)}^{y_{t}} \approx & \overbrace{\mathdutchcal{h}(u_{t-1},d_{t-1})}^{y_{t-1}} +  \overbrace{\nabla_u \mathdutchcal{h}(u_{t-1},d_{t-1})}^{H_{t-1}} \Delta u_{t-1} \\
           & + 
           \nabla_d \mathdutchcal{h}(u_{t-1},d_{t-1}) \Delta d_{t-1}.
        \end{split}
\end{align}    
At each time $t$,} we measure $y_{t}$, and compute the variation $\Delta y_{t-1}= y_{t}-y_{t-1}$. \r{Based on the Taylor approximation \eqref{eq:taylormeaseq},} we treat this variation $\Delta y_{t-1}$ as a noisy linear measurement of $H_{t-1}$ through a measurement model that depends on $\Delta u_{t-1}$: 
     \begin{align}\label{eq:measeq}
        \begin{split}
            \Delta y_{t-1} & = \underbrace{H_{t-1} \Delta u_{t-1}}_{=U_{\Delta,t-1} h_{t-1}} + \omega_{m,t-1} 
        \end{split}
    \end{align}   
where $U_{\Delta,t}=\Delta u_t^T \otimes \mathbbm{1}$, with the Kronecker product $\otimes$, and $\omega_{m,t} \sim \mathcal{N}(0,\Sigma_{m,t})$ is a Gaussian measurement noise with covariance $\Sigma_{m,t}=\Sigma_{m_1}+\Sigma_{m_2}\normsz{\Delta u_{t}}_2^2+\Sigma_{m_3}\normsz{\Delta u_{t}}_2^4$. \r{Again, the part $\Sigma_{m_1}$ independent of $\Delta u_t$ in the measurement noise represents the effect of an uncontrolled random disturbance change $\Delta d_{t}$, while the other parts $\Sigma_{m_2}$ and $\Sigma_{m_3}$ encapsulate the second-order error of the Taylor approximation \eqref{eq:taylormeaseq}.} 


To update the sensitivity estimate $\hat{h}_t$, we combine the information given by the previous sensitivity estimate $\hat{h}_{t-1}=\text{vec}(\hat{H}_{t-1})$, and the measurements $\Delta y_{t-1}$ \eqref{eq:measeq}. We compute the new sensitivity estimate $\hat{h}_t$ through a Bayesian update represented in the following least-squares problem \cite{lennart1999system,isermann2010identification}:
    
\begin{align*}
    \begin{split}
            &\hat{h}_{t} = \arg\min_{\hat{h}} \normsz{\hat{h}-\hat{h}_{t-1}}_{{\Sigma_{t-1}^{-1}}}^2 + \normsz{\Delta y_{t-1} - U_{\Delta,t-1} \hat{h}}_{{\Sigma_{m,t-1}^{-1}}}^2,
    \end{split}
\end{align*}  
where $\Sigma_t$ is the covariance matrix representing the uncertainty of the sensitivity estimate $\hat{h}_t$, and $\normsz{x}_A^2=x^TAx$ is the norm of $x$ with respect to a positive definite matrix $A$. The resulting recursive estimation can be expressed as a Kalman filter \cite{jazwinski1970stochfilt}:
\begin{align}\label{eq:sensi}
        \begin{split}
            \hat{h}_{t} = & \hat{h}_{t-1} + K_{t-1} (\Delta y_{t-1} - U_{\Delta,t-1} \hat{h}_{t-1}) \\
            \Sigma_{t} = & \big(\mathbbm{1} - K_{t-1} U_{\Delta,t-1}\big)\Sigma_{t-1} + \Sigma_{p,t-1}, 
        \end{split}
\end{align}
where $\mathbbm{1}$ is the identity matrix, and $K_t = \Sigma_t U_{\Delta,t}^T(\Sigma_{m,t} + U_{\Delta,t}\Sigma_{t} U_{\Delta,t}^T)^{-1}$ is the Kalman gain, \r{which is well defined for an invertible $\Sigma_{m,t}$, see later Assumption~\ref{ass:sensimodel}.}

\begin{rem}
Note that for a diagonal measurement noise covariance $\Sigma_{m,t}=\sigma_{m,t} \mathbbm{1}$, in the limit $\sigma_{m,t} \to \infty$, the gain is $K_t = 0$, thus the sensitivity is not updated, and we keep the initial sensitivity, i.e., $\hat{h}_{t} = \hat{h}_{t-1} = \cdots = \hat{h}_0$. Similarly, a large $\Sigma_{m,t}$ diminishes $K_t$, and helps to tune how fast we want to learn or differ from the initial sensitivity. \r{On the other hand, the process noise covariance $\Sigma_{p,t}$ represents our trust in our current model, and it also helps to tune the learning rate.}
\end{rem}

\subsection{Persistently Exciting OFO}

To learn the time-varying sensitivity $H_t$, we need to capture enough information via the measurement equation \eqref{eq:measeq}, i.e, we need to use different $\Delta u$ to explore different reactions $\Delta y$ and infer different elements of $H_t$ from them. This can be formalized via the persistency of excitation condition \cite{bai1985persexc}: $\Delta u_t$ is persistently exciting if there exists a time span $T>0$, such that for all $t>0$, the matrix formed by columns $\Delta u_{t+i}$ for $i \in \{0,\dots,T\}$ has full rank, i.e., $\text{rank}(\Delta u_{t}, \dots, \Delta u_{t+T}) = n_u$.
To achieve persistency of excitation, we perturb the OFO step \eqref{eq:ofo} with $\omega_{u,t} \in \mathbb{R}^{n_u}$, \r{a bounded zero-mean white noise with independent and identically distributed elements with standard deviation $\sigma_{u}$, e.g., a truncated Gaussian distribution.} As a result, we obtain the following persistently exciting OFO with estimated sensitivity $\hat{H}_t$:
\begin{align}\label{eq:ofope}
    \begin{split}
        u_{t+1} = \Pi_{\mathcal{U}_t}\big[u_t - \alpha \big(\nabla_u f(u_t) + \hat{H}_t^T \nabla_y g(y_t)\big) + \omega_{u,t}\big]
    \end{split}
\end{align}

The resulting interconnected OFO, sensitivity learning and power grid is represented in the block diagram in Figure~\ref{fig:block_diagram}. At each time $t$, a complete loop of the online optimization with sensitivity estimation can be represented as:

\begin{algorithm}[H]\label{alg:alg}
\caption{Online Feedback Optimization (OFO) with sensitivity estimation (blue block in Figure~\ref{fig:block_diagram})}\label{alg:optest}
\begin{algorithmic}[1]
\STATE \textbf{Input:} $y_{t}$ (measured from the grid)
\STATE Recover from previous step: $y_{t-1},u_{t-1},u_t$
\STATE Sensitivity update using \eqref{eq:sensi}: \newline 
$K_{t-1} = \Sigma_{t-1} U_{\Delta,t-1}^T(\Sigma_{m,t-1} + U_{\Delta,t-1}\Sigma_{t-1} U_{\Delta,t-1}^T)^{-1}$ 
$\hat{h}_{t} = \hat{h}_{t-1} + K_{t-1} (\Delta y_{t-1} - U_{\Delta,t-1} \hat{h}_{t-1})$ \newline 
$\Sigma_{t} = \big(\mathbbm{1} - K_{t-1} U_{\Delta,t-1}\big)\Sigma_{t-1} + \Sigma_{p,t-1}$
\STATE Sample the excitation noise $\omega_{u,t} \sim \mathcal{N}(0,\sigma_{u}^2\mathbbm{1})$
\STATE Input optimization using \eqref{eq:ofope}: \newline
$u_{t+1} = \Pi_{\mathcal{U}_t}\big[u_t - \alpha \big(\nabla_u f(u_t) + \hat{H}_t^T \nabla_y g(y_t)\big) + \omega_{u,t}\big]$
\STATE \textbf{Output:} $u_{t+1}$ 
\end{algorithmic}
\end{algorithm}

\begin{rem}\label{rem:otheroper}
The sensitivity learning approach \eqref{eq:sensi} is independent of the method used to update the input $u$, since it only requires the increment $\Delta u$ and the measured $\Delta y$. Hence, it is not only compatible with the projected-gradient-based OFO in \eqref{eq:ofope}, but can be combined with linearly simplified AC-OPF as \eqref{eq:linOPF}, or other OFO approaches, e.g., primal-dual methods \cite{anese2016optimal,ortmann2020experimental}, quadratic programming \cite{haberle2020non,picallo2021qp}, which may have other desirable properties, like strict constraint satisfaction or a faster convergence.
\end{rem}

\subsection{Convergence Analysis}

In this section we analyze the convergence of the estimated sensitivity $\hat{H}_t$ produced by the sensitivity learning \eqref{eq:sensi}, and the input $u_t$ produced by the OFO \eqref{eq:ofope}, towards the true sensitivity $H_t$ and the solution $u^*_t$ of the AC-OPF \eqref{eq:OPF}, respectively. We certify this convergence assuming that the true sensitivity $H_t$ behaves according to the simplified dynamic process \eqref{eq:Hdyn} and satisfies the linear measurements equation \eqref{eq:measeq}; and that the projected gradient descent used in \eqref{eq:ofope} is a strongly monotone and Lipschitz continuous operator:

\begin{defi}[Monotone and Lipschitz operator]\label{ass:monotoneLipschitz}
An operator $F:\mathbb{R}^n \to \mathbb{R}^n$ is $\eta_F$-strongly monotone if $(x_1-x_2)^T(F(x_1)-F(x_2)) \geq \eta_F \normsz{x_1-x_2}_2^2$ for all $x_1,x_2$, and $L_F$-Lipschitz continuous if $\normsz{F(x_1)-F(x_2)}_2 \leq L_F \normsz{x_1-x_2}_2$.
\end{defi}

\begin{ass}\label{ass:sensimodel}
The functions $f(\cdot)$ and $g(\cdot)$ in \eqref{eq:OPF} are continuously differentiable. The sensitivity satisfies \eqref{eq:Hdyn} and \eqref{eq:measeq} with independent $\omega_{p,t}$ and $\omega_{m,t}$. Furthermore, for all $t>0$, $\Sigma_{p,t},\Sigma_{m,t}$ have a positive lower and upper bound, \r{i.e., there exists $\gamma,\beta>0$ such that $\gamma\mathbbm{1} \preceq \Sigma_{p,t} \preceq \beta\mathbbm{1}$, $\gamma\mathbbm{1} \preceq \Sigma_{m,t} \preceq \beta\mathbbm{1}$}; there exists $L_h>0$ such that $\normsz{\nabla_y g(\mathdutchcal{h}(u^*_t,d_t))}_2 \leq L_h$; and the operator $F_t(\cdot)=\nabla_u f(\cdot) + H_t^T \nabla_y g(\mathdutchcal{h}(\cdot,d_t))$ in \eqref{eq:ofope} is $\eta$-strongly monotone and $L$-Lipschitz continuous.
\end{ass}

\r{The continuous differentiability of $f(\cdot)$ and $g(\cdot)$ is common for typical cost functions in power systems, e.g., linear or quadratic $f(\cdot)$, and quadratic penalty functions like $g(\cdot)=\max(0,\cdot)^2$. For strongly convex and Lipschitz smooth cost functions $f(\cdot)$, the strong monotonicity and Lipschitz continuity of the gradient operator $F_t(\cdot)$ holds in certain regions around nominal operating points \cite{colombino2019towards}. \rr{In particular, it would hold if using a usual linear approximation for the input-output map \eqref{eq:ioPF} \cite{picallo2020seopfrt}.} Since $u$ and $d$ are restricted by the grid physical limits, e.g., power ratings, the upper bound of $\normsz{\Delta u_t}_2$ and $\normsz{\nabla_y g(\mathdutchcal{h}(u^*_t,d_t))}_2$ are justified, \rr{since $g(\cdot)$ is differentiable in a compact set}. The persistency of excitation ensures that $\normsz{\Delta u_t}_2>0$ with high probability. Then, \rr{$\Sigma_{p,t}=\Sigma_{p_1}+\Sigma_{p_2}\normsz{\Delta u_{t}}_2^2\succ 0 ,\Sigma_{m,t}=\Sigma_{m_1}+\Sigma_{m_2}\normsz{\Delta u_{t}}_2^2+\Sigma_{m_3}\normsz{\Delta u_{t}}_2^4 \succ 0$} if at least one $\Sigma_{p_i}\succ 0$ and one $\Sigma_{m_j} \succ 0$ for some $i,j$. Finally, even though the true sensitivity is state dependent, i.e., $H_t = \nabla_u \mathdutchcal{h}(u_t,d_t)$, the process and measurement noises in \eqref{eq:Hdyn} and \eqref{eq:measeq} allow to overapproximate the actual behavior of the sensitivity via these simplifications. 
In conclusion, Assumption~\ref{ass:sensimodel} is reasonable.} Then, with a persistently exciting $\Delta u$ as in \eqref{eq:ofope}, we have the following convergence result:

\begin{prop}\label{prop:conv}
Under Assumption~\ref{ass:sensimodel}, and the persistently excited OFO updates \eqref{eq:ofope}, the sensitivity estimates \eqref{eq:sensi} satisfy:
\begin{align}\label{eq:convhproof}
    \begin{split}
        \text{Unbiased mean: } & \normsz{\mathbb{E}[h_t-\hat{h}_t]}_2^2 \leq C_{h,1} e^{-C_{h,2}t}  \overset{t \to \infty}{\to} 0 \\
        \text{Bounded covariance: } & \mathbb{E}[\normsz{h_t-\hat{h}_t}_2^2] = \text{tr}(\Sigma_t) \\
        & \leq C_{h,3} + C_{h,4} e^{-C_{h,5}t} {\to} C_{h,3} ,
    \end{split}
\end{align}
where $\mathbb{E}[\cdot]$ denotes the expectation, $C_{h,i}>0$ are positive constants, and $\overset{t \to \infty}{\to}$ the limit as $t$ goes to infinity. Furthermore, if the step size in \eqref{eq:ofope} satisfies $\alpha < \tfrac{2\eta}{L^2}$, so that $\epsilon = \sqrt{1-2\eta\alpha + L^2\alpha^2} < 1$, then we have

\begin{align}\label{eq:inputconv}
    \begin{split}
        & \mathbb{E}[\normsz{u_{t}-u_{t}^*}_2] \\
        \leq  & \tfrac{1}{1-\epsilon} \big( \sigma_u + \sup_{k < t} \mathbb{E}[\normsz{\Delta u^*_{k}}_2] + \sqrt{C_{h,3}} \alpha L_h \big) \\[-0.1cm]
        & + \hspace{-0.05cm} \epsilon^{t}\mathbb{E}[\normsz{u_0 \hspace{-0.05cm} - \hspace{-0.05cm} u^*_0}_2] \hspace{-0.05cm} + \hspace{-0.05cm} \alpha L_h t \sqrt{C_{h,4}}\max(\epsilon,e^{\frac{-C_{h,5}}{2}})^{t-1} \\[0.1cm]
        \overset{t \to \infty}{\to} & \tfrac{1}{1-\epsilon} \big( \sigma_u + \sup_{k } \mathbb{E}[\normsz{\Delta u^*_{k}}_2] + \sqrt{C_{h,3}}\alpha L_h  \big).
    \end{split}
\end{align}

\end{prop}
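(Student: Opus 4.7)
The plan is to first establish the two sensitivity-error bounds in \eqref{eq:convhproof}, and then propagate them through a contraction argument to obtain the input bound \eqref{eq:inputconv}.

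\emph{Step 1: Kalman error.} I define the sensitivity estimation error $e_t := h_t - \hat{h}_t$. Subtracting \eqref{eq:sensi} from \eqref{eq:Hdyn} and substituting \eqref{eq:measeq} gives the recursion
\begin{equation*}
e_t = (\mathbbm{1} - K_{t-1} U_{\Delta,t-1})\, e_{t-1} + \omega_{p,t-1} - K_{t-1} \omega_{m,t-1}.
\end{equation*}
Since $\omega_{p,\cdot}$ and $\omega_{m,\cdot}$ are zero-mean and independent of the excitation-driven $U_{\Delta,\cdot}$ and $K_{\cdot}$, iterating yields $\mathbb{E}[e_t] = \prod_{k=0}^{t-1}(\mathbbm{1} - K_k U_{\Delta,k})\,\mathbb{E}[e_0]$. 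With $\Delta u_t$ persistently exciting and $\gamma\mathbbm{1} \preceq \Sigma_{p,t}, \Sigma_{m,t} \preceq \beta\mathbbm{1}$, classical Kalman-filter stability results \cite{jazwinski1970stochfilt} certify that this transition matrix is uniformly contractive, producing the first bound in \eqref{eq:convhproof}. For the covariance, Kalman optimality gives $\mathbb{E}[e_t e_t^T] = \Sigma_t$ exactly; the same uniform observability/controllability estimates applied to the Riccati-like recursion in \eqref{eq:sensi} yield a bounded steady-state plus an exponentially decaying transient for $\text{tr}(\Sigma_t)$.

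\emph{Step 2: oracle contraction.} Against the actual update \eqref{eq:ofope} I juxtapose the noise-free, true-sensitivity step $v_{t+1} := \Pi_{\mathcal{U}_t}[u_t - \alpha F_t(u_t)]$. Non-expansiveness of $\Pi_{\mathcal{U}_t}$ gives
\begin{equation*}
\normsz{u_{t+1}-v_{t+1}}_2 \leq \alpha\normsz{(\hat{H}_t-H_t)^T\nabla_y g(y_t)}_2 + \normsz{\omega_{u,t}}_2 \leq \alpha L_h \normsz{e_t}_2 + \normsz{\omega_{u,t}}_2,
\end{equation*}
applying the $L_h$-bound to $\nabla_y g$ along the trajectory. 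The $\eta$-monotonicity and $L$-Lipschitzness of $F_t$, combined with the fixed-point characterization $u^*_t = \Pi_{\mathcal{U}_t}[u^*_t - \alpha F_t(u^*_t)]$, yield the standard contraction $\normsz{v_{t+1}-u^*_t}_2 \leq \epsilon\normsz{u_t-u^*_t}_2$ whenever $\alpha < 2\eta/L^2$.

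\emph{Step 3: combine and unroll.} Triangle inequality with the tracking increment $\Delta u^*_t = u^*_{t+1}-u^*_t$ and expectation give the scalar linear recursion
\begin{equation*}
\mathbb{E}[\normsz{u_{t+1}-u^*_{t+1}}_2] \leq \epsilon\,\mathbb{E}[\normsz{u_t-u^*_t}_2] + \alpha L_h\,\mathbb{E}[\normsz{e_t}_2] + \sigma_u + \mathbb{E}[\normsz{\Delta u^*_t}_2].
\end{equation*}
Jensen's inequality together with \eqref{eq:convhproof} bounds $\mathbb{E}[\normsz{e_t}_2] \leq \sqrt{C_{h,3}} + \sqrt{C_{h,4}}\,e^{-C_{h,5}t/2}$. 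Unrolling the recursion, the stationary contributions $\sigma_u$, $\sqrt{C_{h,3}}\,\alpha L_h$, and $\sup_k\mathbb{E}[\normsz{\Delta u^*_k}_2]$ sum as a geometric series to the $1/(1-\epsilon)$ factor, while the decaying convolution $\sum_{k=0}^{t-1}\epsilon^{t-1-k} e^{-C_{h,5}k/2}$ is uniformly bounded by $t\,\max(\epsilon, e^{-C_{h,5}/2})^{t-1}$ (covering both the generic and the $\epsilon = e^{-C_{h,5}/2}$ cases), which delivers \eqref{eq:inputconv}.

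The main obstacle I anticipate is \emph{Step~1}: formalizing the uniform exponential contraction of the Kalman closed-loop under a \emph{random}, OFO-coupled persistently-exciting $\Delta u_t$. Classical Kalman results assume deterministic persistency, so a careful almost-sure or expected-value argument is needed to extract an observability-like condition from the excitation noise, and to guarantee that the constants $C_{h,3},C_{h,4}$ do not themselves depend on $\normsz{u_t-u^*_t}_2$ in a way that would render Step~3 circular.
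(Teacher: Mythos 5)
Your proposal is correct and follows essentially the same route as the paper: persistency of excitation plus the noise-covariance bounds give uniform complete observability/controllability of the pair $(\mathbbm{1},U_{\Delta,t})$, so classical Kalman-filter stability yields \eqref{eq:convhproof}; then the fixed-point characterization of $u^*_t$, non-expansiveness of the projection, the $\pm H_t\nabla_y g$ insertion (your intermediate iterate $v_{t+1}$ is the same decomposition written differently), and the $\eta$-strong-monotone/$L$-Lipschitz contraction give the scalar recursion, which is unrolled with subadditivity of $\sqrt{\cdot}$ and the geometric-series bound exactly as in the appendix. The concern you flag in Step~1 about random, OFO-coupled excitation is real but is also glossed over by the paper, which simply cites the classical deterministic results.
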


\begin{proof}
See Appendix.
\end{proof}

\r{Proposition~\ref{prop:conv} establishes first that the estimated sensitivity $\hat{h}_t$ converges in expectation to the true sensitivity $h_t$ with a bounded covariance. Additionally, the control input $u_t$ converges to the AC-OPF solution $u_t^*$ from \eqref{eq:OPF} with a quantifiable tracking error determined by the bound $C_{h,3}$ of the sensitivity estimation covariance, the variance $\sigma_u$ of the persistency of excitation noise $\omega_u$, and the temporal variation of the AC-OPF solution $\mathbb{E}[\normsz{\Delta u^*_{t}}_2^2]$, where $\Delta u^*_{t}$ can also be bounded by the temporal variation of $d_t$ and $\mathcal{U}_t$ in the AC-OPF \eqref{eq:OPF} \cite{subotic2021quantitative}.}

\section{Test Case}\label{sec:test}

In this section we validate the proposed OFO with sensitivity estimation. We simulate a benchmark distribution grid under time-varying conditions during a 1-hour simulation with 1-second resolution, \r{hence a 1 second control-loop rate. In particular, we show its superior performance against an OFO approach with a constant sensitivity.} First we explain the simulation setup, and then we comment the results obtained.

\subsection{Simulation Setup}\label{subsec:simDef}

\begin{figure}[t]
\centering
\includegraphics[width=8cm,height=6.4cm]{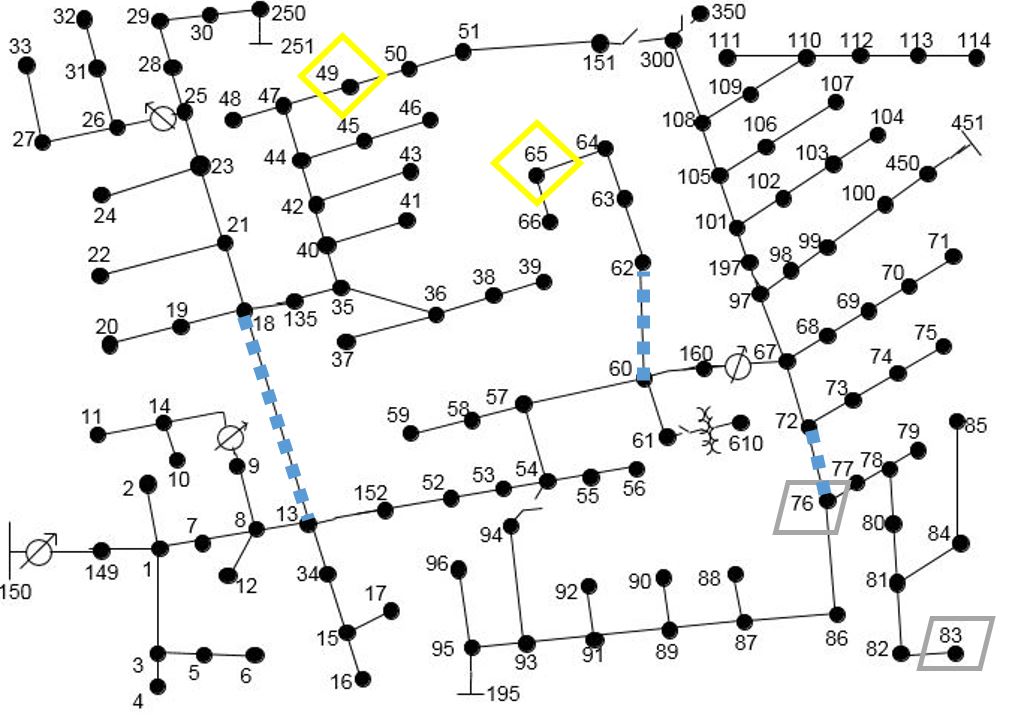}    
\caption{IEEE 123-bus test feeder \cite{kersting1991radial}. \textbf{Distributed generation:} yellow diamond~=~solar, grey parallelogram~=~wind. \textbf{Lines with perturbed electrical parameters:} blue square-dotted.} 
\label{fig:123bus}
\end{figure}
\begin{itemize}[leftmargin=*]
    \item Distribution grid: We use the 3-phase, unbalanced IEEE 123-bus test feeder \cite{kersting1991radial} in Figure~\ref{fig:123bus}. 
    
    \item Disturbance $d$: We consider uncontrollable active and reactive loads in our disturbance vector $d$. To generate these load profiles we use $1$-second resolution data of the ECO data set \cite{ECOdata}, then aggregate households and rescale them to the base loads of the 123-bus feeder. This gives us values of $d_t$ for every second during simulation time of 1h. 
    
    \item Controllable input set-points $u$: We add two solar PV systems and two wind turbines to the grid 
    as in \cite{picallo2020seopfrt}, see Figure~\ref{fig:123bus}. They can inject active power, and inject and absorb reactive power on all three phases, which gives us 24 control inputs. We consider a slack bus 150 in Figure~\ref{fig:123bus}, with a controllable voltage magnitude through, e.g., a tap changer, which makes in total $n_u=25$. The solar and wind generation profiles are generated based on a $1$-minute solar irradiation profile \cite{solarprofile} and a $2$-minute wind speed profile \cite{windprofile}. Generation is assumed constant between samples. We use these profiles to set the time-varying upper limit of the feasible set $\overline{u}_{t}$, set the lower limit of active generation to $\underline{u}_t=0$, and define $\mathcal{U}_t = \{u\,|\,\underline{u}_t \leq u \leq \overline{u}_t\}$. 
    
    \item Output $y$: We consider as output $y$ the voltage magnitudes of all phases at all buses except the slack bus, given that it is a control input.
    
    \item AC-OPF cost function in \eqref{eq:OPF}: We use a quadratic cost that penalizes deviating from a reference: $f(u)=\frac{1}{2}\normsz{u-u_\text{ref}}_2^2$. The reference $u_\text{ref}$ for the voltage magnitude at the slack bus is $1$~p.u. The reference for the controllable generation is the maximum installed power to promote using as much renewable energy as possible. The reference for reactive power is $0$. Note that the cost function is continuously differentiable, and has a strongly monotone and Lipschitz continuous gradient as required in Assumption~\ref{ass:sensimodel}. We consider the voltage limits $[0.94\, \text{p.u.},1.06\,\text{p.u.}]$ for all nodes as in \cite{hauswirth2017online,picallo2020seopfrt}, and use the penalty function $g(y)= \frac{\rho}{2}\max\big( \left[\begin{smallmatrix} \mathbbm{1}  \\ -\mathbbm{1}  \end{smallmatrix}\right] y + \left[\begin{smallmatrix} -1.06 \\ 0.94 \end{smallmatrix}\right],0 \big)^2$, with a sufficiently large penalization parameter $\rho=100$ to discourage violations. Again, this function is continuously differentiable, and has a monotone and Lipschitz continuous gradient.
    
    \item \r{Sensitivity process and measurement noises in \eqref{eq:Hdyn} and \eqref{eq:measeq}: Under fast sampling rates $\Delta d_t$ may be negligible, especially when compared to $\Delta u_t$. Hence, for the simulation we assign $\Sigma_{p_1},\Sigma_{m_1},\Sigma_{m_2}$ to 0, and keep $\Sigma_{p_2},\Sigma_{m_3} \succ 0$. This ensures that $\Sigma_{p,t},\Sigma_{m,t}\succ 0$ for all $t$, as required by Assumption~\ref{ass:sensimodel}.}
    
    \item Persistency of excitation: We use a symmetric truncated Gaussian distribution with $\sigma_u=0.0001$~p.u. to introduce a low persistency of excitation noise $\omega_{u,t}$ that facilitates our sensitivity learning, but avoids introducing a big deviation in the input convergence, see \eqref{eq:inputconv}. 
    
    \item Initializing sensitivity and linear model \eqref{eq:linPF}: We use the zero-injection operating point $u_\text{op}=0,d_\text{op}=0$ to initialize the sensitivity estimation, i.e., $\hat{H}_0= H_0=\nabla_u \mathdutchcal{h}(u,d)|_{(0,0)}$, see \eqref{eq:linPF}. In the first simulation (1: true admittance) we use the true admittance to compute $H_0$, in the second (2: perturbed admittance) we use a perturbed admittance matrix, where we have introduce an up to $20\%$ error in the admittance of the lines indicated in Figure~\ref{fig:123bus}.

\end{itemize}

\subsection{Results}
We analyze the simulation performance of OFO with sensitivity learning \eqref{eq:sensi} and \eqref{eq:ofope}, and compare it against an OFO with constant sensitivity \eqref{eq:ofo}. We validate both results in Proposition~\ref{prop:conv}: First, the estimated sensitivity $\hat{H}_t$ converges to the real time-varying sensitivity $H_t$. Second, the input $u_t$ converges to the AC-OPF solution $u^*_t$ \eqref{eq:OPF}. 

\subsubsection{True admittance}
First we perform a simulation where we use the true admittance to derive the initial sensitivity $H_0$ in the linear power flow approximation \eqref{eq:linPF}. Figure~\ref{fig:opterror} shows the norm of the AC-OPF solution $u^*_t$ of \eqref{eq:OPF} that we calculate with the correct non-linear model $\mathdutchcal{h}(\cdot)$ and the disturbances $d_t$. This optimal input is time-varying due to the changing solar radiation and wind speed in the limits $\overline{u}_{t}$, and the temporal variation of the loads in $d_t$.  Figure~\ref{fig:opterror} shows how the OFO control input $u_t$ converges towards the optimal input $u^*_t$ using different sensitivities: The inputs $u_{H}$ produced by the OFO controller \eqref{eq:ofo} with the exact sensitivity $H_t=\nabla \mathdutchcal{h}(u_t,d_t)$ succeed in tracking the AC-OPF solution $u^*$, with relatively small differences caused by the time-varying disturbances $d_t$ and/or available energy $\overline{u}_t$. However, when using the constant sensitivity $H_0$ in \eqref{eq:ofo}, there is a large difference between the generated control input $u_{H_0}$ and the optimal one $u^*$. This gap is closed when using the OFO with sensitivity estimation \eqref{eq:ofope}, i.e., $u_{\hat{H}}$ is able to converge to the AC-OPF solution $u^*$ of \eqref{eq:OPF} with a small tracking error, as predicted by Proposition~\ref{prop:conv}.

\r{Figure~\ref{fig:sensierror} shows the relative error $\frac{\normsz{\Delta y - H\Delta u}_2}{\normsz{\Delta y}_2}$ of the measurement equation \eqref{eq:measeq}. This helps to understand why OFO with sensitivity learning \eqref{eq:sensi} performs better than with a constant sensitivity $H_0$: The linearization error with estimated sensitivity $\hat{H}_t$ gets lower respect to the one with $H_0$. This means that the learned sensitivity becomes a more accurate linear approximation than \eqref{eq:linPF}, which causes the lower optimization error observed in Figure~\ref{fig:opterror}. Even though the error $\frac{\normsz{\Delta y - H\Delta u}_2}{\normsz{\Delta y}_2}$ does not converge to $0$ when using $\hat{H}$, the sensitivity estimation approach \eqref{eq:sensi} learns enough to drive the control set-points to the optimum, see Figure~\ref{fig:opterror}, which is our ultimate objective.}

\r{Finally, Figure~\ref{fig:Vlimits} shows that the inputs $u_{\hat{H}}$, produced by the OFO with sensitivity estimation \eqref{eq:ofope} result into much less voltage violations than $u_{H_0}$ from the OFO with constant sensitivity \eqref{eq:ofo}. Actually, the number of voltage violations of $u_{\hat{H}}$ gets close to those of the OFO with true sensitivity $u_H$. Hence, the OFO with sensitivity estimation not only reduces the distance to the AC-OPF solution, see Figure~\ref{fig:opterror}, but performs a better voltage regulation.}

\subsubsection{Perturbed admittance}
In Figure~\ref{fig:opterrornoisy} we show a simulation for which we perturb the admittance of the lines indicated in Figure~\ref{fig:123bus} with an up to $20\%$ error. We observe how the OFO with sensitivity learning $u_{\hat{H}}$ \eqref{eq:ofope} is still able to track the AC-OPF solution $u^*$ of \eqref{eq:OPF} in time-varying conditions. The OFO $u_{H_0}$ with a fixed sensitivity \eqref{eq:ofo} and the same step size as $u_{\hat{H}}$ diverges, since it tries to regulate the voltage with a wrong sensitivity that is too far from the actual one. Convergence is recovered with a lower step size in $u_{H_0,\text{slow}}$, but it still performs poorly at tracking the AC-OPF solution. \r{This experiment allows us to conclude that the OFO with sensitivity estimation \eqref{eq:ofope} is a model-free approach that does not require an accurate model, but learns it online. 
}


\begin{figure}[t]
\centering
\includegraphics[width=9cm]{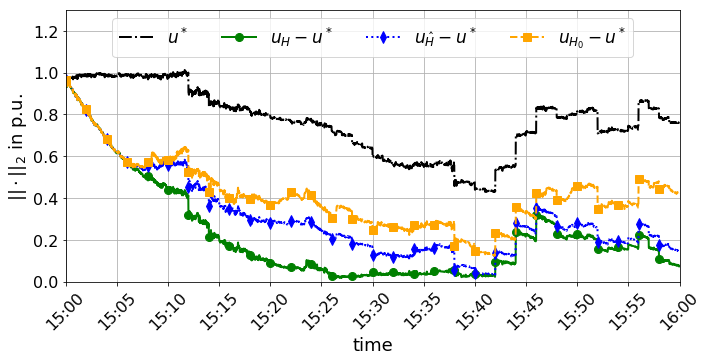}    
\caption{Euclidean norm of the AC-OPF solution $u_{t}^*$, and the optimization error between $u_{t}^*$ and the set-points $u_t$ produced by the OFO \eqref{eq:ofo}, using either the true sensitivity $H$ (green with dots), the estimated sensitivity ${\hat{H}}$ (blue with diamonds), the constant sensitivity at a zero-injection operation point ${H_0}$ (yellow with squares), with respective set-points $u_H,u_{\hat{H}},u_{H_0}$.} 
\label{fig:opterror}
\end{figure}

\begin{figure}[t]
\centering
\includegraphics[width=9cm]{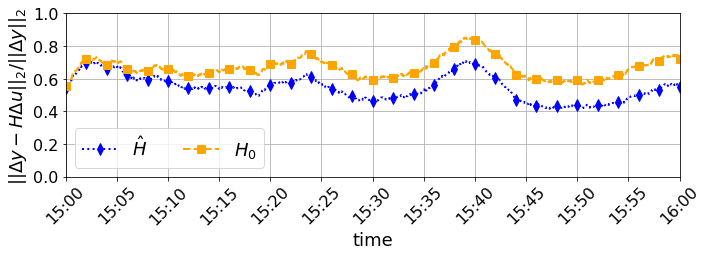}    
\caption{Moving average over 5 minutes of the relative error $\frac{\normsz{\Delta y - H\Delta u}_2}{\normsz{\Delta y}_2}$ when using the learned sensitivity $\hat{H}$ (blue with diamonds) or the one fixed at an zero-injection operation point $H_0$ (yellow with squares).}  
\label{fig:sensierror}
\end{figure}


\begin{figure}[t]
\centering
\includegraphics[width=9cm]{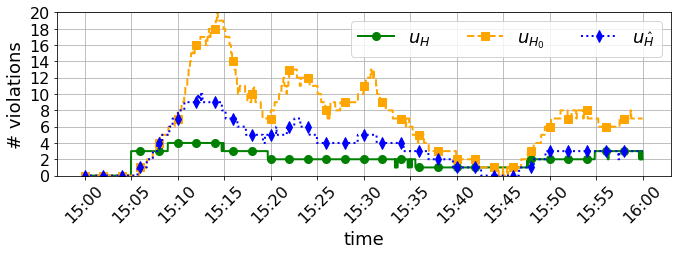}    
\caption{Moving average over 5 minutes of the number of voltage violations across all nodes.} 
\label{fig:Vlimits}
\end{figure}

\begin{figure}[t]
\centering
\includegraphics[width=9cm]{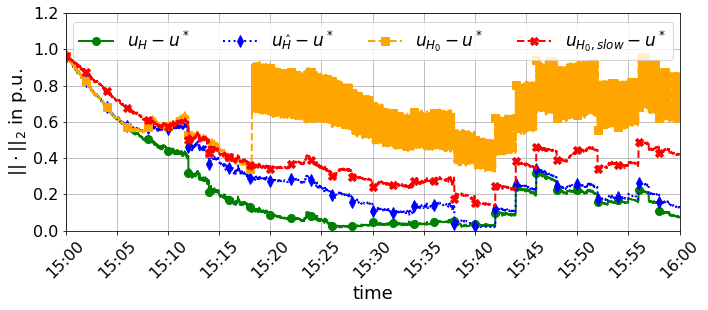}    
\caption{
Same as Figure~\ref{fig:opterror}. For the constant sensitivity $H_0$, we plot $u_{H_0}$ (yellow) when using the same step size as $u_H,u_{\hat{H}}$, and $u_{H_0,\text{slow}}$ (red) with a smaller step size. Both $\hat{H}$ and $H_0$ are initialized with an perturbed admittance matrix $Y$.} 
\label{fig:opterrornoisy}
\end{figure}

\section{Conclusion and Outlook}\label{sec:conc}
Standard Online Feedback Optimization (OFO) typically uses an approximate input-output sensitivity, which may lower its performance. Alternative, one can compute the actual sensitivity, but that requires, having an accurate grid model and full grid observability, which is usually not available.
In this work we have proposed a recursive estimation approach that provides Online Feedback Optimization (OFO) with a tool to learn the model sensitivity without extensive measurements, and thus improves its performance and turns OFO into a model-free approach. We have provided convergence guarantees when approximating the time-varying sensitivity behavior by a random process with linear measurements. We have established that even under time-varying conditions the estimated sensitivity and the control input converge to a neighborhood of the true sensitivity and the solution of the AC-OPF, respectively. \r{Finally, we have validated with simulations using the IEEE 123-bus test feeder that our proposed OFO controller with sensitivity estimation performs successfully even though the actual sensitivity is state-dependent}, i.e., it is able to track a time-varying optimal input while satisfying the grid specifications. In short, the proposed OFO controller with sensitivity estimation can be used as a model-free plug-and-play controller for real-time power grid operation that enables safe and optimal control. 

An interesting future addition would be to investigate a more suitable way to design the persistency of excitation, possibly linked to the optimization problem, so that it explores specific directions of interest. 
Additionally, it would be interesting to observe how the proposed sensitivity estimation approach performs under a sudden change of topology caused by, e.g., a line fault, network split, etc.; \rr{under communication problems, e.g., delays,  missing packages, recurrent outliers due to, for example, sensor misscalibration.} 

\bibliographystyle{IEEEtran}
\bibliography{IEEEabrv,biblio}


\section*{Appendix: Proof of Proposition~\ref{prop:conv}}\label{app:proof}

Consider the information matrix $W_I = \sum_{k=t}^{t+T} U_{\Delta,k}^T {\Sigma_{m,k}^{-1}} U_{\Delta,k} \allowbreak =\sum_{k=t}^{t+T} (\Delta u_k \Delta u_k^T) \otimes {\Sigma_{m,k}^{-1}}$. 
Since $\gamma \mathbbm{1} \preceq \Sigma_{m,t} \preceq \beta \mathbbm{1}$ for all $t$, we have $\frac{1}{\beta} \mathbbm{1} \preceq \Sigma_{m,t}^{-1} \preceq \frac{1}{\gamma} \mathbbm{1}$, and $(\sum_{k=t}^{t+T} \Delta u_k \Delta u_k^T) \otimes \frac{1}{\beta}\mathbbm{1} \preceq W_I \preceq (\sum_{k=t}^{t+T} \Delta u_k \Delta u_k^T) \otimes \frac{1}{\gamma}\mathbbm{1}$. 
Since $\Delta u$ is persistently exciting, there exists a sufficiently large $T$ and $\gamma_2,\beta_2>0$ so that $\gamma_2\mathbbm{1} \preceq \sum_{k=t}^{t+T} \Delta u_k \Delta u_k^T \preceq \beta_2\mathbbm{1}$, and thus $\frac{\gamma_2}{\beta}\mathbbm{1} \preceq W_I \preceq \frac{\beta_2}{\gamma}\mathbbm{1}$. 
Hence, the matrix pair $(\mathbbm{1},U_{\Delta,t})$ from the dynamic system \eqref{eq:Hdyn} and \eqref{eq:measeq} is uniformly completely observable, and, additionally, uniformly complete controllable given $\Sigma_{p,t}\succ 0$ \cite[Ch.~7]{jazwinski1970stochfilt}. As a result, the sensitivity converges exponentially in expectation, and is exponentially bounded in mean square \cite{jazwinski1970stochfilt, tarn1976observers}, i.e., there exists positive constants $C_{h,i}>0$ satisfying \eqref{eq:convhproof}.

\rr{Then, under Assumption~\ref{ass:monotoneLipschitz} we have
\begin{align*}
    \begin{split}
        & \normsz{u_{t+1}-u_{t+1}^*}_2 \leq \normsz{u_{t+1}-u^*_{t}}_2 + \normsz{\Delta u^*_{t}}_2 
        \\[-0.1cm]
        \overset{\eqref{eq:ofope}}{\hspace{0.4cm}\leq} & \normsz{\Pi_{\mathcal{U}_t}\big[u_t - \alpha \big(\nabla_u f(u_t) + \hat{H}_t^T \nabla_y g(y_t)\big) + \omega_{u,t}\big] \\
        & - \Pi_{\mathcal{U}_t}\big[u^*_t - \alpha F_t(u^*_t) \big]}_2 + \normsz{\Delta u^*_{t}}_2 
        \\
        {\leq} & \normsz{\big(u_t - \alpha \big(\nabla_u f(u_t) \hspace{-0.05cm} + \hspace{-0.05cm} \hat{H}_t^T \nabla_y g(y_t)\big) + \omega_{u,t}\big) \pm H_t \nabla_y g(y_t)\\
        & - \big(u^*_t - \alpha F_t(u^*_t) \big)}_2 + \normsz{\Delta u^*_{t}}_2 
        \\
        \leq & \normsz{\big(u_t - \alpha F_t(u_t)\big) - \big(u^*_t - \alpha F_t(u^*_t) \big)}_2 + \normsz{\omega_{u,t}}_2\\
        & + \alpha L_h\normsz{h_t-\hat{h}_t}_2 + \normsz{\Delta u^*_{t}}_2 
        \\
        \leq & \epsilon \normsz{u_t  - u^*_t}_2 + \normsz{\omega_{u,t}}_2 + \alpha L_h\normsz{h_t -  \hat{h}_t}_2 + \normsz{\Delta u^*_{t}}_2,
    \end{split}
\end{align*}
where in the second inequality we use that $u^*_t$ satisfies $u^*_t = \Pi_{\mathcal{U}_t}\big[u^*_t - \alpha F_t(u_t^*)\big]$, i.e., due to optimality $u^*_t$ is a fixed point of the operator \eqref{eq:ofope} with $\omega_{u,t}=0$ and the true sensitivity $H_t$ instead of the estimated one $\hat{H}_t$. In the fourth inequality, where $\epsilon^2 = 1-2\eta\alpha + L^2\alpha^2$, we use that the operator $F_t(\cdot)$ is $\eta$-strongly monotone and $L$-Lipschitz continuous. Hence, in expectation we have
\begin{align*}
    \begin{split}
        & \mathbb{E}[\normsz{u_{t+1}-u_{t+1}^*}_2] 
        \\
        \leq & \epsilon \mathbb{E}[\normsz{u_t-u^*_t}_2] + \sigma_u + \mathbb{E}[\normsz{\Delta u^*_{t}}_2] + \alpha L_h \mathbb{E}[\normsz{h_t-\hat{h}_t}_2] 
        \\
        \leq & \epsilon^{t+1}\mathbb{E}[\normsz{u_0-u^*_0}_2] + \tfrac{1}{1-\epsilon} \big( \sigma_u + \sup_{k\leq t} \mathbb{E}[\normsz{\Delta u^*_{k}}_2] \big) \\[-0.35cm]
        & + \alpha L_h \sum_{k=0}^{t} \epsilon^{t-k}\mathbb{E}[\normsz{h_k-\hat{h}_k}_2] 
        \\[-0.2cm]
        \overset{\eqref{eq:convhproof}}{\leq} &
        \epsilon^{t+1}\mathbb{E}[\normsz{u_0-u^*_0}_2] \\ 
        & + \tfrac{1}{1-\epsilon} \big( \sigma_u + \sup_{k\leq t} \mathbb{E}[\normsz{\Delta u^*_{k}}_2] + \sqrt{C_{h,3}}\alpha L_h \big) \\[-0.15cm]
        & + \alpha L_h (t+1) \sqrt{C_{h,4}}\max(\epsilon,e^{\frac{-C_{h,5}}{2}})^t
        \\[-0.1cm]
        \overset{t \to \infty}{\to} & \tfrac{1}{1-\epsilon} \big( \sigma_u + \sup_{k} \mathbb{E}[\normsz{\Delta u^*_{k}}_2] + \sqrt{C_{h,3}}\alpha L_h \big),
    \end{split}
\end{align*}
where in the second inequality we apply the first one recursively. In the second and third inequality we bound the geometric series $\sum_{k=0}^t \epsilon^{t-k} \leq \frac{1}{1-\epsilon}$, and use that $\sqrt{\cdot}$ is subadditive.}

\end{document}